\newtheorem{theorem}{Theorem}
\newtheorem{corollary}{Corollary}
\newenvironment{proof}[1][Proof]{\noindent\textbf{#1.} }{\ \rule{0.5em}{0.5em}}
\begin{document}

\author{Renan Cabrera, Traci Strohecker, Herschel Rabitz \\ 
{\small Department of Chemistry, Princeton University, Princeton, New Jersey 08544, USA}}

\title{ The Canonical Coset Decomposition of Unitary Matrices 
Through Householder Transformations }
\maketitle 
\abstract{ This paper reveals the relation between the canonical coset decomposition 
of unitary matrices and the corresponding decomposition via Householder reflections.
These results can be used to parametrize unitary matrices via Householder reflections.
}

\emph{\\Copyright (2010) American Institute of Physics. This article may be downloaded for personal use only. Any other use requires prior permission of the author and the American Institute of Physics.\\
The following article appeared in J. Math. Phys. 51, 082101 (2010); doi:10.1063/1.3466798 and may be found at \\
\url{http://jmp.aip.org/jmapaq/v51/i8/p082101_s1?isAuthorized=no}
 }

\section{Introduction}
The parametrization of unitary matrices is important in many fields of physics 
such as quantum computation \cite{PhysRevLett.73.58,ivanov:022323}, 
particle physics \cite{jarlskog2005quark} and
quantum iterferometry \cite{rowe1999representations}. Several methods for the parametrization of $SU(N)$
have been developed, including one that uses Euler angles 
\cite{tilma2002gea} to find explicit expressions for the Haar measure. 
Another form was given by Reck \emph{et al.}, who explicitly constructed arbitrary unitary operators in terms
of two-state pivots \cite{PhysRevLett.73.58}. 
Dita proposed a parametrization through diagonal unitary matrices interlaced with real orthogonal 
matrices \cite{dita2003factorization} and Rowe \emph{et al.} developed another method
studying the Wigner functions for $SU(3)$ \cite{rowe1999representations}, which were also
used in a formulation of coherent states for $SU(N)$ in the work of Nemoto\cite{nemoto2000generalized}. 
More recently, a technique resembling the canonical coset parametrization 
was presented by Jarlskog \cite{jarlskog2005recursive,fujii2006jarlskog}.  

Matrix decomposition through Householder reflections 
\cite{householder1958unitary} was recently proposed as an efficient method for synthesizing 
unitary operators\cite{ivanov:022323,ivanov:012335}. This technique can be implemented in certain
quantum systems using $N$ instead of $N^2$ steps as required in   
methods applying two-state pivots \cite{PhysRevLett.73.58}
following the logic of the Givens rotation \cite{givens1958computation}.
The canonical coset parametrization \cite{gilmore1973lgl,gilmore2008lgp} of unitary matrices
 is a general method that may be used to calculate explicit expressions of 
the Haar \cite{gilmore1973lgl,gilmore2008lgp,Boya2003401} and Bures measures
 \cite{1751-8121-40-37-010,akhtarshenas:012102}, which are known to be important in 
Bayesian quantum estimation \cite{slater1996qfb}. 
The main purpose of this paper is to establish the connection between the Householder decomposition 
and the canonical coset parametrization of unitary operators, which could have 
practical significance in a number of physical applications.

\section{Householder Decomposition}
A unitary operator can be seen as a basis of orthonormal vectors and the Householder 
decomposition of unitary operators serves to align the original basis in terms of 
a new basis. The Householder decomposition consists of a sequence of transformations, each of 
which is a reflection with respect to a hyperplane defined by the orthonormal 
vector $|n \rangle$, such that 
\begin{equation}
 R_{|n\rangle}  = \mathbf{1} - 2 |n \rangle \langle n |.
\end{equation}
As a proper reflection, $R_{|n \rangle}$ satisfies  $det(R_{|n \rangle})=-1$ and $ R_{|n\rangle} R_{|n\rangle} = \mathbf{1}$. Given a 
set of orthonormal basis elements $|e_k\rangle$, the unitary matrix $U(N)$ may be 
decomposed as a product of $N$ factors using Householder reflections that are designed
to sequentially align the columns of $U(N)$ along the orthonormal basis elements $|e_k\rangle$.

In order to illustrate the Householder decomposition, let $|W_1 \rangle$ be the first 
column of  $U(N)$ and $\phi_1$ the phase of the topmost complex 
component of $|W_1 \rangle$. The first Householder reflection is constructed as
\begin{equation}
R_{|u_1 \rangle} = \mathbf{1} - 2 \frac{1}{ \langle u_1| u_1 \rangle }  | u_1 \rangle \langle u_1 |,  
\end{equation}
where $| u_1 \rangle = |W_1 \rangle + e^{i\phi_1}| e_1 \rangle$, such that $R_{|u_1\rangle}|W_1\rangle = -|e_1\rangle$. For visualization, 
the case with two-dimensional real vectors is shown in Figure \ref{fig:householder}.
\begin{figure}[ht] 
\centering
\includegraphics[scale=0.7]{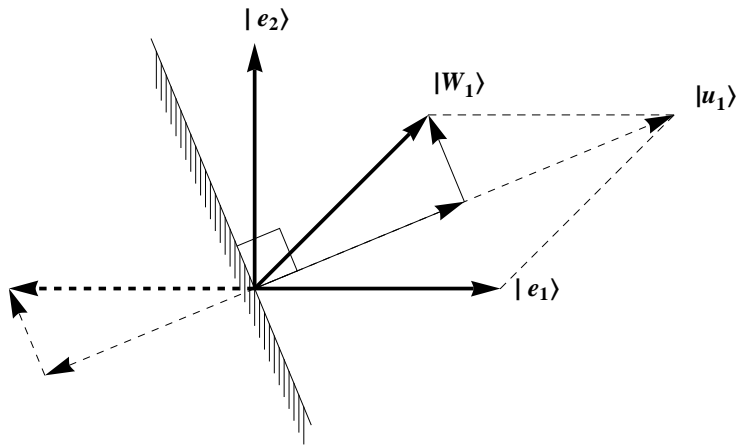}
\caption{ Graphical representation of a Householder reflection constructed 
for a real two-dimensional unit-length vector 
$| W_1 \rangle$ reflected on the plane perpendicular to $|u_1\rangle$. The net
result is a reflection that positions $| W_1 \rangle$  along the direction $-|e_1 \rangle$. 
An additional reflection can be performed to position  $| W_1 \rangle$  along  $|e_1 \rangle$.
}
\label{fig:householder}
\end{figure}
The most general definition of the vector $|u_1\rangle$ requires $|e_1 \rangle$
to be multiplied by the magnitude of the vector $|W_1 \rangle$, but as $U(N)$ is unitary,
then $|W_1 \rangle$ has unit norm. Although, the opposite sign choice  $e^{i\phi_1} \rightarrow -e^{i\phi_1}$
could be utilized, a positive sign is required in order to arrive at the canonical 
coset decomposition as shown in the proof of Theorem \ref{theorem-1} below. 
Moreover, the positive sign is usually chosen to obtain better numerical stability 
(see \cite{stoer2002introduction}, page 225).
The above procedure is repeated in recursion, with $|W_2 \rangle$ as the second 
column of $R_{|u_1 \rangle} U(N)$ (instead of $U(n)$) and  $| u_2 \rangle = |W_2 \rangle + e^{i\phi_2}| e_2 \rangle$. 
In this way, the resulting Householder decomposition becomes
\begin{equation}
 U(N) = R_{|u_1 \rangle}  R_{|u_2 \rangle}...  R_{|u_{N-1} \rangle} \,\, \mathcal{U}(1)^{N},
\end{equation} 
where $\mathcal{U}(1)^{N}$ is the diagonal matrix with element $e^{i\phi_l}$ at the $l$-th position
along the diagonal.

The construct of the canonical coset decomposition from the 
Householder decomposition is given by the following theorem 
\begin{theorem}
The factors of the canonical coset decomposition
\begin{equation}
U(N) = \frac{U(N)}{U(N-1)\otimes U(1)}\,
  \frac{U(N-1)}{U(N-2)\otimes U(1)}\,...\,
  \frac{U(2)}{U(1)\otimes U(1)}
  U(1)^{\otimes N}
\label{left-can-coset}, 
\end{equation}
can be expressed in terms of Householder reflections $R_{|u_k \rangle}$ as
\begin{eqnarray}
 \frac{U(N)}{U(N-1)\otimes  U(1)  } &=& R_{|u_1\rangle} R_{|e_1 \rangle} \\
\frac{U(N-1)}{U(N-2)\otimes  U(1)  } &=& R_{|u_2\rangle} R_{|e_2 \rangle} \\
&\vdots  & \nonumber \\
U(1)^{\otimes N} &=& R_{|e_1 \rangle} R_{|e_2 \rangle}... R_{|e_{N-1} \rangle}  {\mathcal{U}(1)}^{N}  
\end{eqnarray}
where the  corresponding Householder decomposition is 
\begin{equation}
 U(N) = R_{|u_1 \rangle}  R_{|u_2 \rangle}...  R_{|u_{N-1} \rangle} \,\, {\mathcal{U}(1)}^{N},
\end{equation} 
such that
  \begin{equation}
| u_1 \rangle = |W_1 \rangle + e^{i\phi_1}| e_1 \rangle,
\label{v1} 
\end{equation}  
with $|W_1\rangle$ being the first column of $U(N)$ and $\phi_1$ the phase of the first 
component of $|W_1 \rangle$. The remaining vectors $|u_k\rangle$ 
are calculated recursively. 
\label{theorem-1}
\end{theorem}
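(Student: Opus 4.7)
The plan is to take the Householder decomposition $U(N) = R_{|u_1\rangle} R_{|u_2\rangle} \cdots R_{|u_{N-1}\rangle} \mathcal{U}(1)^{N}$ already established in the preceding section and rearrange it by inserting $N-1$ identities of the form $R_{|e_k\rangle}^{2} = \mathbf{1}$. The key algebraic fact that makes the rearrangement possible is that $R_{|u_k\rangle}$ commutes with $R_{|e_j\rangle}$ whenever $j < k$. To prove this, I would first argue inductively that $|u_k\rangle$ vanishes in its first $k-1$ components: after the reflections $R_{|u_1\rangle}, \ldots, R_{|u_{k-1}\rangle}$ have been applied from the left to $U(N)$, the first $k-1$ columns are aligned along multiples of $|e_1\rangle, \ldots, |e_{k-1}\rangle$, and unitarity forces the corresponding rows to vanish outside those entries; hence the $k$-th column $|W_k\rangle$ read off at this stage has zero components in positions $1, \ldots, k-1$, and so does $|u_k\rangle = |W_k\rangle + e^{i\phi_k}|e_k\rangle$. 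The commutation then follows from the general identity $V R_{|v\rangle} V^{-1} = R_{V|v\rangle}$ specialized to $V = R_{|e_j\rangle}$, since $R_{|e_j\rangle}$ fixes any vector whose $j$-th component is zero.

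With this relation in hand, I would insert $R_{|e_k\rangle}^{2}$ between $R_{|u_k\rangle}$ and $R_{|u_{k+1}\rangle}$ for each $k$, then slide the rightmost copy of $R_{|e_k\rangle}$ past every $R_{|u_\ell\rangle}$ with $\ell > k$ and past the previously deposited $R_{|e_j\rangle}$ factors (which commute with one another as diagonal matrices). The resulting identity is
\[
U(N) = \bigl(R_{|u_1\rangle} R_{|e_1\rangle}\bigr) \bigl(R_{|u_2\rangle} R_{|e_2\rangle}\bigr) \cdots \bigl(R_{|u_{N-1}\rangle} R_{|e_{N-1}\rangle}\bigr) \, R_{|e_1\rangle} R_{|e_2\rangle} \cdots R_{|e_{N-1}\rangle} \, \mathcal{U}(1)^{N},
\]
whose trailing block is manifestly diagonal and unitary, i.e.\ an element of $U(1)^{\otimes N}$, matching the last line of the claimed decomposition.

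The most delicate step is the structural identification of each pair $R_{|u_k\rangle} R_{|e_k\rangle}$ with a representative of the canonical coset $U(N-k+1)/\bigl(U(N-k) \otimes U(1)\bigr)$ embedded in $U(N)$. I would verify: (i) $R_{|u_k\rangle} R_{|e_k\rangle}$ fixes $|e_1\rangle, \ldots, |e_{k-1}\rangle$ because both factors do, so it acts only on the last $N-k+1$ coordinates and therefore lies in the $U(N-k+1)$ block; and (ii) its action on $|e_k\rangle$ singles out the distinguished coset coordinate. A direct computation using $\langle u_k|u_k\rangle = 2 + 2|\langle e_k|W_k\rangle|$ and $\langle u_k|e_k\rangle = e^{-i\phi_k}\bigl(1 + |\langle e_k|W_k\rangle|\bigr)$ yields $R_{|u_k\rangle}|e_k\rangle = -e^{-i\phi_k}|W_k\rangle$, and hence $R_{|u_k\rangle} R_{|e_k\rangle}|e_k\rangle = e^{-i\phi_k}|W_k\rangle$, which is (up to the phase already collected into $\mathcal{U}(1)^{N}$) the standard $\mathbb{CP}^{N-k}$ coordinate labeling a point of $U(N-k+1)/\bigl(U(N-k) \otimes U(1)\bigr)$. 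It is precisely this phase bookkeeping that forces the positive sign convention in $|u_k\rangle = |W_k\rangle + e^{i\phi_k}|e_k\rangle$: with the opposite sign the single reflection $R_{|u_k\rangle}$ would already deliver the coset representative, making the accompanying $R_{|e_k\rangle}$ superfluous and breaking the exact pairing with the canonical coset factors displayed in the theorem.
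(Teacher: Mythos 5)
Your rearrangement of the Householder product is sound and follows essentially the same strategy as the paper: insert $R_{|e_k\rangle}^{2}=\mathbf{1}$, establish $[R_{|e_j\rangle},R_{|u_k\rangle}]=0$ for $k>j$ from the block (vanishing first $k-1$ components) structure of $|u_k\rangle$, and collect the deposited diagonal reflections on the right, where they combine with $\mathcal{U}(1)^N$ into $U(1)^{\otimes N}$. The gap is in the step you yourself flag as delicate: identifying each pair $R_{|u_k\rangle}R_{|e_k\rangle}$ with the \emph{canonical} coset representative. What you actually verify is $R_{|u_k\rangle}R_{|e_k\rangle}|e_k\rangle=e^{-i\phi_k}|W_k\rangle$, i.e.\ one column of the product. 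That determines only which coset of $U(N-k+1)/\bigl(U(N-k)\otimes U(1)\bigr)$ the matrix lies in (a point of $\mathbb{CP}^{N-k}$); it does not show that the matrix equals the distinguished representative of that coset, namely Gilmore's
\begin{equation*}
\begin{pmatrix} \sqrt{1-r^2} & -\langle X| \\ |X\rangle & \sqrt{\mathbf{1}-|X\rangle\langle X|}\end{pmatrix}
=\exp\begin{pmatrix}\mathbf{0} & B \\ -B^\dagger & 0\end{pmatrix}.
\end{equation*}
Infinitely many unitaries share that first column, but the theorem asserts equality with one specific matrix, so the remaining block structure must be computed. This is precisely the computational content of the paper's proof: writing $|n_1\rangle=\gamma|e_1\rangle+|n_\perp\rangle$ with $\gamma=(\rho+1)e^{i\phi}/\sqrt{\langle u_1|u_1\rangle}$, expanding $R_{|n\rangle}R_{|e_1\rangle}$ into a block matrix with entries $2|\gamma|^2-1$, $-2\gamma\langle n_\perp|$, $2\gamma^{*}|n_\perp\rangle$ and $\mathbf{1}_\perp-2|n_\perp\rangle\langle n_\perp|$, then setting $|X\rangle=2\gamma^{*}|n_\perp\rangle$ and checking both that $2|\gamma|^2-1=\rho=\sqrt{1-r^2}\ge 0$ and that $\mathbf{1}_\perp-2|n_\perp\rangle\langle n_\perp|=\sqrt{\mathbf{1}_\perp-|X\rangle\langle X|}$. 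None of these verifications appears in your argument, and without them the pairing with the canonical coset factors is asserted rather than proved.

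A smaller point: your explanation of the sign convention is not right. With the choice $|u_k\rangle=|W_k\rangle-e^{i\phi_k}|e_k\rangle$ one finds $2|\gamma|^2-1=-\rho\le 0$, so the $(1,1)$ entry of the would-be representative fails the requirement $\sqrt{1-r^2}\ge 0$; that is what breaks the canonical form. It is not that ``a single reflection would already deliver the coset representative'': a Householder reflection has determinant $-1$ and is Hermitian, so it can never equal the exponential form above, which has determinant $+1$.
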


\begin{proof}
Let us define the normalized vector $|n_1 \rangle$ as
\begin{equation}
  |n_1 \rangle = \frac{1}{\sqrt{ \langle u_1 |u_1 \rangle }}|u_1 \rangle
\end{equation}
The first column of $U(N)$ can be decomposed into parallel and  perpendicular 
parts with respect to $|e_1 \rangle$ as
\begin{equation}
 | W_1  \rangle =  | W_{1\parallel}  \rangle +  | W_{1\perp}  \rangle =
   \rho e^{i\phi_1} |e_1 \rangle    +  | W_{1\perp}  \rangle,
\label{W1-def}
\end{equation}
where $\rho e^{i\phi_1}$ is the polar representation of the first component of $|W_1 \rangle$. The normalized 
vector $|n_1 \rangle$ can be expressed as
\begin{equation}
   |n_1 \rangle = \frac{1}{\sqrt{ \langle u_1 |u_1 \rangle }}\left( 
   |W_{1\perp}\rangle + ( \rho +1) e^{i\phi_1}| e_1 \rangle  \right).
\label{v-def}
\end{equation}
This allows writing  $|n_1 \rangle$ as
\begin{equation}
 |n_1 \rangle = |n_{\parallel} \rangle  +  |n_{\perp} \rangle  =  \gamma |e_1 \rangle +  |n_{\perp} \rangle, 
\label{n-gamma}
\end{equation}
with $\gamma =  \frac{ (\rho+1)e^{i\phi}  }{\sqrt{ \langle u_1 |u_1 \rangle }}  $ and $| n_{\perp} \rangle = \frac{ |W_{1\perp} \rangle}{\sqrt{ \langle u_1 |u_1 \rangle }} $.
 From (\ref{W1-def}) and (\ref{v-def}) we obtain
\begin{eqnarray}
  1 &=& \rho^2 +  \langle W_{1\perp} |W_{1\perp} \rangle \\
  \langle u_1 |u_1 \rangle &=&  \langle W_{1\perp} |W_{1\perp} \rangle + (\rho+1)^2,
\end{eqnarray}
which can be used to extract $  \langle u_1 |u_1 \rangle =2(1+\rho)$ and write $\gamma$ as
\begin{equation}
 \gamma = \sqrt{ \frac{(1+\rho)}{2}  }e^{i\phi}.
\label{gamma-rho} 
\end{equation} 
The product of the following Householder operators  can be expanded as
\begin{eqnarray}
 R_{|n\rangle} R_{|e_1 \rangle}  &=& 
 (\mathbf{1} - 2 |n \rangle \langle n | )
 (\mathbf{1} - 2 | e_1 \rangle \langle e_1 |) \\
 &=& \mathbf{1} - 2 |n_{\perp} \rangle \langle n_{\perp} | 
  - 2 \gamma | e_1 \rangle \langle n_{\perp} | + 2\gamma^{*} |  n_{\perp} \rangle \langle e_{1} | \nonumber \\
 & & +\,2( |\gamma|^2-1 )| e_1 \rangle \langle e_1 |. \nonumber
\label{step-1}
\end{eqnarray}
The identity matrix can be decomposed as $\mathbf{1} = \mathbf{1}_{\perp}+| e_1 \rangle \langle e_1 |$ 
and  introduced above to yield
\begin{equation}
 R_{|n\rangle} R_{|e_1 \rangle} = 
 \mathbf{1}_{\perp} - 2 |n_{\perp} \rangle \langle n_{\perp} | 
  - 2 \gamma | e_1 \rangle \langle n_{\perp} | + 2\gamma^{*} |  n_{\perp} \rangle \langle e_{1} |
 +( 2|\gamma|^2-1 )| e_1 \rangle \langle e_1 |,
\end{equation}
which can be decomposed into four parts expressed in block matrix form as
\begin{equation}
  R_{|n\rangle} R_{|e_1 \rangle} = 
   \begin{pmatrix} 2|\gamma|^2-1   & 
    - 2 \gamma  \langle n_{\perp} |  \\
   2\gamma^{*} |  n_{\perp} \rangle  &  \mathbf{1}_{\perp} - 2 |n_{\perp} \rangle \langle n_{\perp} |
   \end{pmatrix}.
\label{block-mat1}
\end{equation}
Next, the following variable change is applied
\begin{equation}
 |X \rangle = 2 \gamma^{*} | n_{\perp} \rangle,
\label{Xn} 
\end{equation}
and the variable $r$ is defined as the magnitude of $|X\rangle$
\begin{equation}
 r = \sqrt{ \langle X| X\rangle }.
\end{equation}
From (\ref{n-gamma}) we see that
\begin{equation}
 r^2 = 4 |\gamma|^2 \langle n_{\perp}  | n_{\perp} \rangle 
     = 4 |\gamma|^2 (1-|\gamma|^2), 
\end{equation}
and with some algebra this leads to 
 \begin{equation} \sqrt{1-r^2} = 2 |\gamma|^2-1.  
\label{r-def}
 \end{equation} 
This construction is consistent only if $ 1 \ge 2 |\gamma|^2-1 \ge 0 $,
which is seen to be true by inspecting (\ref{gamma-rho}). Moreover, 
(\ref{gamma-rho}) implies that $  2 |\gamma|^2-1 = \rho $.

The block matrix (\ref{block-mat1}) can be written in the form of
a canonical coset as follows
\begin{eqnarray}
  R_{|n\rangle} R_{|e_1 \rangle} &=& 
   \begin{pmatrix}  \sqrt{1-r^2}  & 
   -  \langle X | \\
     | X \rangle &   \mathbf{1}_{\perp} - \frac{1-\sqrt{1-r^2}}{r^2}  |X \rangle \langle X |
   \end{pmatrix} \\
  &=& 
  \begin{pmatrix} \sqrt{1-r^2}   & 
   -  \langle X | \\
     | X \rangle &  \sqrt{ \mathbf{1}_{\perp} - |X \rangle \langle X | }
   \end{pmatrix}\\
   &=&  \frac{U(N)}{U(N-1)\otimes U(1)} ,
\end{eqnarray}
which is the form given by Gilmore\cite{gilmore1973lgl,gilmore2008lgp}.

The remainder of the cosets can be found recursively with the 
insertion of reflections of the form $R_{|e_k \rangle}$, which can be constructed by
replacing the k-th element of the identity matrix with $-1$.  
There is flexibility in the sequential choice of the reflections $R_{|e_k \rangle}$ 
because they commute with each other $[R_{|e_j \rangle} , R_{|e_k \rangle} ] = 0$.
Additionally, since $R_{|u_k \rangle}$ is a block matrix, the following identity can be verified  
\begin{equation}
 [R_{|e_j \rangle} , R_{|u_k \rangle} ] = 0 \,\,\,\, \text{for}\,\, k>j.
\end{equation}
Thus, the theorem is proved.
\end{proof}

\begin{corollary}
The factors of the \emph{reversed} canonical coset decomposition,  
\begin{equation}
U(N) =  U(1)^{\otimes N}
 \frac{U(2)}{U(1)\otimes U(1)}\,\, ... \frac{U(N-1)}{U(N-2)\otimes U(1)}\,\,
 \frac{U(N)}{U(N-1)\otimes U(1)} 
\label{rev-coset}, 
\end{equation}
can be expressed in terms of the reversed Householder reflections $R_{|u_k \rangle}$ as
\begin{eqnarray}
 \frac{U(N)}{U(N-1)\otimes  U(1)  } &=& R_{|e_1 \rangle} R_{|u_1\rangle} \\
\frac{U(N-1)}{U(N-2)\otimes  U(1)  } &=& R_{|e_2 \rangle} R_{|u_2\rangle} \\
&\vdots  & \nonumber \\
U(1)^{\otimes N} &=&  {\mathcal{U}(1)}^{N}R_{|e_{N-1} \rangle}... R_{|e_2 \rangle}  R_{|e_1 \rangle}  
\end{eqnarray}
where, the  corresponding Householder decomposition is 
\begin{equation} 
 U(N) =    {\mathcal{U}(1)}^{N}  R_{|u_{N-1} \rangle} ...  
 R_{|u_2 \rangle}  R_{|u_1 \rangle},
\end{equation} 
such that
  \begin{equation}
\langle u_1 | =  \langle W_1 | + e^{i\phi_1} \langle e_1 |,
\label{rev-v1} 
\end{equation}  
with $\langle W_1|$ being the first \emph{row} of $W$ and $\phi_1$ the phase of the first 
component of $\langle W_1 |$. The remainder of the vectors $ \langle u_k |$ 
are calculated recursively. 
\end{corollary}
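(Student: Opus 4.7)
The plan is to derive the reversed decomposition from Theorem \ref{theorem-1} via the duality $U \leftrightarrow U^\dagger$, which exchanges rows and columns and thereby converts the row-based procedure of the corollary into the column-based procedure of the theorem. Since $U(N)^\dagger$ is itself unitary, Theorem \ref{theorem-1} supplies a factorization $U(N)^\dagger = R_{|u_1'\rangle} R_{|u_2'\rangle} \cdots R_{|u_{N-1}'\rangle} \mathcal{U}'(1)^N$ together with the coset grouping $R_{|u_k'\rangle} R_{|e_k\rangle}$ and the trailing torus block $R_{|e_1\rangle} \cdots R_{|e_{N-1}\rangle}\mathcal{U}'(1)^N$. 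Because every Householder reflection is self-adjoint, taking the Hermitian adjoint of both sides merely reverses the order of the product without altering any of the reflection vectors: the coset factors transform as $(R_{|u_k'\rangle} R_{|e_k\rangle})^\dagger = R_{|e_k\rangle} R_{|u_k'\rangle}$ and the torus block becomes $\mathcal{D}\, R_{|e_{N-1}\rangle} \cdots R_{|e_1\rangle}$ with $\mathcal{D} = (\mathcal{U}'(1)^N)^\dagger$, precisely matching the forms stated in the corollary and listing the cosets in the order of (\ref{rev-coset}).

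The remaining content is to identify the primed vectors $|u_k'\rangle$ with the $\langle u_k|$ specified in the corollary. For $k=1$, the first column of $U^\dagger$ has entries equal to the complex conjugates of the first row of $U$, so the first component of $|W_1'\rangle$ carries phase $-\phi_1$, giving $|u_1'\rangle = |W_1'\rangle + e^{-i\phi_1}|e_1\rangle$. Dualizing yields $\langle u_1'| = \langle W_1| + e^{i\phi_1}\langle e_1|$, which is exactly (\ref{rev-v1}). The identification carries through the recursion: taking the adjoint sends the $k$-th column of $R_{|u_{k-1}'\rangle}\cdots R_{|u_1'\rangle} U^\dagger$ to the $k$-th row of $U\, R_{|u_1\rangle}\cdots R_{|u_{k-1}\rangle}$, which is the natural recursive prescription for the reversed procedure operating from the right.

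The main obstacle is purely notational bookkeeping: one must track signs and complex conjugations carefully to confirm $\langle u_k'| = \langle u_k|$ at each recursive stage. No new algebraic content is required, because the commutation relations $[R_{|e_j\rangle}, R_{|e_k\rangle}] = 0$ and $[R_{|e_j\rangle}, R_{|u_k\rangle}] = 0$ for $k > j$ used in the proof of Theorem \ref{theorem-1} survive Hermitian adjunction intact, so the coset-by-coset reassembly of the reversed product follows \emph{mutatis mutandis} from the original argument.
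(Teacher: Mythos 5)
Your proposal is correct and matches the route the paper implicitly intends: the corollary is stated without proof precisely because it follows from Theorem \ref{theorem-1} applied to $U^\dagger$, using the Hermiticity and involutivity of Householder reflections to reverse the product order, exactly as you argue. Your explicit check that the adjoint of a canonical coset representative is again one (with $|X\rangle \to -|X\rangle$) and that the first column of $U^\dagger$ yields $\langle u_1| = \langle W_1| + e^{i\phi_1}\langle e_1|$ supplies the bookkeeping the paper omits.
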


One important consequence of Theorem \ref{theorem-1}  is the opportunity 
it affords to  parametrize the Householder reflections. This can be accomplished with the use of
(\ref{Xn}), as described in the following corollary.
\begin{corollary}
The normal Householder vector $|n\rangle$ can be
parametrized in terms of the canonical coset vector
$|X \rangle$ as
\begin{equation}
 | n \rangle = \gamma | e_1 \rangle +   \frac{1}{2 \gamma^{*}} |X \rangle,
\label{nX}
\end{equation}
where
 \begin{equation} 
 |\gamma| =  \frac{\sqrt{1-  \langle X|X \rangle} + 1}{2}, 
 \end{equation}
with the phase of $\gamma$ extracted from the phase of the first diagonal element of $ {\mathcal{U}(1)}^{N} $.
\end{corollary}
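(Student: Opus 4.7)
The plan is to read off the corollary directly from the algebraic identities already established in the proof of Theorem \ref{theorem-1}, with the only real task being to invert the relations to express $|n\rangle$ and $|\gamma|$ as explicit functions of $|X\rangle$.

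First, I would recall the two structural equations from the preceding proof. Equation (\ref{n-gamma}) gives the orthogonal splitting
\begin{equation*}
|n\rangle = \gamma |e_1\rangle + |n_\perp\rangle,
\end{equation*}
while equation (\ref{Xn}) gives the change of variables $|X\rangle = 2\gamma^{*}|n_\perp\rangle$. Solving the latter for the perpendicular component yields $|n_\perp\rangle = \tfrac{1}{2\gamma^{*}}|X\rangle$, and substituting back into the splitting produces the claimed identity (\ref{nX}). This is just one line of algebra once the two earlier identities are invoked.

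Next, I would compute $|\gamma|$ in terms of $r=\sqrt{\langle X|X\rangle}$. Equation (\ref{r-def}) in the theorem proof reads $\sqrt{1-r^2} = 2|\gamma|^2 - 1$, which I would simply rearrange to isolate $|\gamma|^2 = \tfrac{1 + \sqrt{1-\langle X|X\rangle}}{2}$, matching the stated expression (with the understanding that the corollary formula gives the squared modulus). I would also note, via equation (\ref{gamma-rho}), that this value is automatically in the admissible range $[1/2,1]$, since $\rho \in [0,1]$.

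Finally, I would address the phase of $\gamma$. From (\ref{gamma-rho}) we have $\gamma = \sqrt{(1+\rho)/2}\, e^{i\phi_1}$, where $\phi_1$ is precisely the phase of the topmost component of the first column $|W_1\rangle$ of $U(N)$. By the definition of $\mathcal{U}(1)^N$ in the Householder decomposition, this same $\phi_1$ is the argument of the first diagonal entry of $\mathcal{U}(1)^N$. Hence $\arg(\gamma)$ is read off from that entry, completing the parametrization. There is no substantial obstacle here — the content of the corollary is essentially a repackaging of (\ref{n-gamma}), (\ref{Xn}), (\ref{gamma-rho}) and (\ref{r-def}); the main thing to be careful about is keeping track of the complex conjugate in $1/(2\gamma^{*})$ and confirming that the phase convention matches the one fixed by $\mathcal{U}(1)^N$.
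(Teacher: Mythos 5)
Your proposal is correct and follows exactly the (implicit) derivation the paper intends: the corollary is just the inversion of equations (\ref{n-gamma}), (\ref{Xn}), (\ref{gamma-rho}) and (\ref{r-def}) from the proof of Theorem \ref{theorem-1}. You also rightly note that the derivation yields $|\gamma|^2 = \bigl(1+\sqrt{1-\langle X|X\rangle}\bigr)/2$, so the corollary's displayed formula for $|\gamma|$ is missing a square root (a typo in the paper that your reading correctly flags).
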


The parametrization of unitary matrices is essential in order to calculate the Haar metric 
and Haar measure \cite{Boya2003401} of unitary matrices, and for this purpose
 using Householder reflections is more efficient than other alternatives such as the parametrization in terms 
of Euler angles\cite{tilma2002gea}. 
In the same spirit, the parametrization of the Householder 
normal vectors $|n \rangle$ (\ref{nX}) can be exploited to generate random unitary matrices with 
the Haar measure as suggested by Ivanov \cite{ivanov-optimal}. 
This can be accomplished by homogeneously distributing $|X \rangle$ within Euclidean balls \cite{1751-8121-42-44-445302}
while uniformly distributing the phase of $\gamma$ in the range $ arg(\gamma) \in [-\pi,\pi]$.
For example, the generation of  random unitary matrices $U(3)$ with the Haar 
measure requires the following parametrization
\begin{equation}
U(3) =  (\mathbf{1} - 2 |n_1 \rangle \langle n_1|)(\mathbf{1} - 2 |n_2 \rangle \langle n_2|)
diag(e^{i\phi_1}, e^{i\phi_2}, e^{i\phi_3}),
\end{equation}
where the phases $\phi_k$ are drawn from a uniform distribution over the 
range $[-\pi,\pi]$ and where $|n_1 \rangle$ and  $|n_2\rangle$ are parametrized within balls $B^4$ and $B^2$, 
respectively.

As an illustrative example for the calculation of the canonical coset decomposition, consider 
\begin{equation}
  U_0 = \begin{pmatrix}
   \frac{i}{\sqrt{2}} & \frac{i}{\sqrt{2}} & 0  \\
  -\frac{i}{2} & \frac{i}{2} & \frac{i}{\sqrt{2}} \\
  -\frac{1}{2} & \frac{1}{2} &-\frac{1}{\sqrt{2}}
   \end{pmatrix}.
\label{W-example}
\end{equation}
The Householder decomposition is $U_0 = | R \rangle_{|u_1\rangle}  | R \rangle_{|u_2\rangle} \mathcal{U}(1)^3  $, where
\begin{eqnarray}
  | R \rangle_{|u_1\rangle} &=& \begin{pmatrix}
   -\frac{1}{\sqrt{2}} & \frac{1}{2} &\frac{i}{2} \\
  \frac{1}{2} & \frac{(2+\sqrt{2})}{4} & -\frac{i}{4+2\sqrt{2}} \\
  -\frac{i}{2} & \frac{i}{4 + 2\sqrt{2}} & \frac{2 + 2\sqrt{2}}{4}
   \end{pmatrix} \\
  | R \rangle_{|u_2\rangle} &=& \begin{pmatrix}
  1 &0 & 0 \\ 0 & -\frac{1}{\sqrt{2}}   & -\frac{i(1+\sqrt{2})}{2+\sqrt{2} }\\
  0 & \frac{i(1+\sqrt{2})}{2+\sqrt{2} } & \frac{1}{\sqrt{2}}
  \end{pmatrix}\\
 {\mathcal{U}(1)}^3  &=& \begin{pmatrix}
 -i & 0  & 0 \\
 0  & -i & 0 \\
 0  & 0  & -1
\end{pmatrix}.
\end{eqnarray}
This leads to the corresponding canonical coset decomposition
\begin{eqnarray}
  \frac{U(3)}{U(2)\otimes U(1)} &=& \begin{pmatrix}
   \frac{1}{\sqrt{2}} & \frac{1}{2} &\frac{i}{2} \\
  -\frac{1}{2} & \frac{(2+\sqrt{2})}{4} & -\frac{i}{4+2\sqrt{2}} \\
  \frac{i}{2} & \frac{i}{4 + 2\sqrt{2}} & \frac{2 + 2\sqrt{2}}{4}
   \end{pmatrix} \\
  \frac{U(2)}{U(1)\otimes U(1)}  &=& \begin{pmatrix}
  1 &0 & 0 \\ 0 & \frac{1}{\sqrt{2}}   & -\frac{i(1+\sqrt{2})}{2+\sqrt{2} }\\
  0 & -\frac{i(1+\sqrt{2})}{2+\sqrt{2} } & \frac{1}{\sqrt{2}}
  \end{pmatrix}\\
 U(1)^3  &=& \begin{pmatrix}
 i & 0  & 0 \\
 0  & i & 0 \\
 0  & 0  & -1
\end{pmatrix}.
\end{eqnarray}

The reversed canonical coset decomposition (\ref{rev-coset}) can be obtained using the 
reversed Householder reflections with the following results
\begin{eqnarray}
  \frac{U(3)}{U(2)\otimes U(1)} &=& \begin{pmatrix}
   \frac{1}{\sqrt{2}} & \frac{1}{2} & 0  \\
  -\frac{1}{\sqrt{2}} & \frac{1}{\sqrt{2}} & 0 \\
  0 & 0 & 1
   \end{pmatrix} \\
  \frac{U(2)}{U(1)\otimes U(1)}  &=& \begin{pmatrix}
  1 &0 & 0 \\ 0 & \frac{1}{\sqrt{2}}   & \frac{1 }{\sqrt{2}}\\
  0 & -\frac{1}{\sqrt{2}} & \frac{1}{\sqrt{2}}
  \end{pmatrix}\\
 U(1)^3  &=& \begin{pmatrix}
 i & 0  & 0 \\
 0  & i & 0 \\
 0  & 0  & -1
\end{pmatrix}.
\end{eqnarray}

\section{Conclusions}
We formally proved the connection between the Householder
and canonical coset decompositions of unitary matrices. This result 
allows for performing the canonical coset decomposition of unitary matrices 
very efficiently. Furthermore, the Householder decomposition is now fully 
parametrized in terms of the canonical coset vectors $|X \rangle$ (\ref{nX}).

\section{Acknowledgment}
The authors acknowledge the support from the DOE and ARO.

\section*{Appendix }
The following exponential involving the complex vector $B$ can be written 
in terms of the Cartesian coordinates $x^j$ as
\begin{equation}
   \exp{ \begin{pmatrix} \mathbf{0} & B \\ -B^\dagger &0 \end{pmatrix}  } 
= \begin{pmatrix} [\mathbf{1} - XX^\dagger]^{1/2} & X \\
   -X^\dagger &  [1 - X^\dagger X]^{1/2}  \end{pmatrix}.
\label{exp-coset}
\end{equation}
such that 
\begin{equation}
 X =  \frac{\sin \sqrt{B^\dagger}B}{\sqrt{B^\dagger}B  } B =
 \begin{pmatrix}
  x^1 + i x^2 \\ x^3 +i x^4\\ \vdots \\ x^{2N-3} + i x^{2N-2}
 \end{pmatrix}
\end{equation}
The Cartesian coordinates range inside an even ball $B^{2k}$, where the radial
coordinate is $r^2 = X^\dagger X $. This exponential is important because it 
provides a parametrization of the coset $ \frac{U(N)}{U(N-1)\otimes U(1)}$ as a $ N \times N$ matrix, which
can be used to parametrize the unitary operator $\Omega$ as
\begin{equation}
\Omega \in  \frac{U(N)}{U(1)^{\otimes N}} =  
 \frac{U(N)}{U(N-1)\otimes U(1)} \,\, \frac{U(N-1)}{U(N-2)\otimes U(1)}...\frac{U(2)}{U(1)\otimes U(1)}.
\label{main-coset} 
\end{equation}
In the current paper we are interested in the representation of $U(3)$, which are generated 
from two cosets. The left coset is
\begin{equation}
 \frac{U(2)}{U(1)\otimes U(1)} = 
\begin{pmatrix}
1 & 0 & 0  \\
0 & \sqrt{1- (x^1)^2 - (x^2)^2 } & -x^1 +i x^2 \\
0 & x^1 +i x^2 & \sqrt{1- (x^1)^2 - (x^2)^2 }
\end{pmatrix}, 
\end{equation}
with the variables inside a disk $ (x^1)^2 + (x^2)^2 \le 1$. The right coset is 
\begin{equation}
 \frac{U(3)}{U(2)\otimes U(1)} = 
\begin{pmatrix}
\sqrt{1 - \xi^2} & -x^5 + i x^6 & -x^3 + i x^4  \\
x^5 + i x^6  & V_{22} & V_{23} \\
x^3 + i x^4  & V_{23}^* & V_{33}
\end{pmatrix}, 
\end{equation}
with 
\begin{eqnarray}
V_{22} &=&  \frac{ (x^3)^2 + (x^4)^2 + \sqrt{1-\xi^2}( { (x^5)^2 + (x^6)^2   )  }}{ \xi^2 } \\
V_{23} &=&  \frac{\sqrt{1-\xi^2} - 1}{\xi^2} (x^3-ix^4)(x^5 + i x^6) \\
V_{33} &=&  \frac{\sqrt{1-\xi^2}}{\xi^2}(x^3 + x^4) + \frac{1}{\xi^2}(x^5 + x^6)\\
\xi^2  &=&  (x^3)^2 + (x^4)^2  +  (x^5)^2  +  (x^6)^2
\end{eqnarray}
and the variables in the following range
\begin{equation}
 (x^3)^2 + (x^4)^2 +  (x^5)^2 + (x^6)^2    \le 1 
\end{equation}

\bibliographystyle{unsrt}


\begin{thebibliography}{10}

\bibitem{PhysRevLett.73.58}
Michael Reck, Anton Zeilinger, Herbert~J. Bernstein, and Philip Bertani.
\newblock Experimental realization of any discrete unitary operator.
\newblock {\em Phys. Rev. Lett.}, 73(1):58--61, Jul 1994.

\bibitem{ivanov:022323}
Peter~A. Ivanov, E.~S. Kyoseva, and N.~V. Vitanov.
\newblock {Engineering of arbitrary U(N) transformations by quantum Householder
  reflections}.
\newblock {\em Physical Review A (Atomic, Molecular, and Optical Physics)},
  74(2):022323, 2006.

\bibitem{jarlskog2005quark}
C.~Jarlskog.
\newblock {The quark mixing matrix with manifest Cabibbo substructure and an
  angle of the unitarity triangle as one of its parameters}.
\newblock {\em Physics Letters B}, 615(3-4):207--212, 2005.

\bibitem{rowe1999representations}
DJ~Rowe, BC~Sanders, and H.~{De Guise}.
\newblock {Representations of the Weyl group and Wigner functions for SU(3)}.
\newblock {\em Journal of Mathematical Physics}, 40:3604, 1999.

\bibitem{tilma2002gea}
Todd Tilma and ECG Sudarshan.
\newblock {Generalized Euler angle parametrization for SU(N)}.
\newblock {\em J. Phys. A: Math. Gen}, 35:10467--10501, 2002.

\bibitem{dita2003factorization}
P.~Dita.
\newblock {Factorization of unitary matrices}.
\newblock {\em Journal of Physics A: Mathematical and General}, 36:2781--2789,
  2003.

\bibitem{nemoto2000generalized}
K.~Nemoto.
\newblock {Generalized coherent states for SU(N) systems}.
\newblock {\em Journal of Physics A: Mathematical and General}, 33:3493--3506,
  2000.

\bibitem{jarlskog2005recursive}
C.~Jarlskog.
\newblock {A recursive parametrization of unitary matrices}.
\newblock {\em Journal of Mathematical Physics}, 46:103508, 2005.

\bibitem{fujii2006jarlskog}
K.~Fujii, K.~Funahashi, and T.~Kobayashi.
\newblock {Jarlskog's parametrization of unitary matrices and qdit theory}.
\newblock {\em International Journal of Geometric Methods in Modern Physics},
  3(2):269--283, 2006.

\bibitem{householder1958unitary}
A.S. Householder.
\newblock {Unitary triangularization of a nonsymmetric matrix}.
\newblock {\em Journal of the ACM (JACM)}, 5(4):342, 1958.

\bibitem{ivanov:012335}
Peter~A. Ivanov and Nikolay~V. Vitanov.
\newblock Synthesis of arbitrary unitary transformations of collective states
  of trapped ions by quantum householder reflections.
\newblock {\em Physical Review A (Atomic, Molecular, and Optical Physics)},
  77(1):012335, 2008.

\bibitem{givens1958computation}
W.~Givens.
\newblock {Computation of plane unitary rotations transforming a general matrix
  to triangular form}.
\newblock {\em Journal of the Society for Industrial and Applied Mathematics},
  pages 26--50, 1958.

\bibitem{gilmore1973lgl}
R.~Gilmore.
\newblock {\em {Lie Groups, Lie Algebras, and Some of Their Applications}}.
\newblock John Wiley \& Sons, 1973.

\bibitem{gilmore2008lgp}
R.~Gilmore.
\newblock {\em {Lie Groups, Physics, and Geometry: An Introduction for
  Physicists, Engineers and Chemists}}.
\newblock Cambridge University Press, 2008.

\bibitem{Boya2003401}
Luis~J. Boya, E.C.G. Sudarshan, and Todd Tilma.
\newblock Volumes of compact manifolds.
\newblock {\em Reports on Mathematical Physics}, 52(3):401--422, 2003.

\bibitem{1751-8121-40-37-010}
S.~J. Akhtarshenas.
\newblock An explicit computation of the bures metric over the space of
  n-dimensional density matrices.
\newblock {\em Journal of Physics A: Mathematical and Theoretical},
  40(37):11333--11341, 2007.

\bibitem{akhtarshenas:012102}
S.~J. Akhtarshenas.
\newblock Canonical coset parametrization and the bures metric of the
  three-level quantum systems.
\newblock {\em Journal of Mathematical Physics}, 48(1):012102, 2007.

\bibitem{slater1996qfb}
Paul~B. Slater.
\newblock {Quantum Fisher-Bures information of two-level systems and a
  three-level extension}.
\newblock {\em Journal of Physics A-Mathematical and General}, 29(10):271,
  1996.

\bibitem{stoer2002introduction}
J.~Stoer, R.~Bulirsch, W.~Gautschi, and C.~Witzgall.
\newblock {\em {Introduction to numerical analysis}}.
\newblock Springer Verlag, 2002.

\bibitem{ivanov-optimal}
Internal communication~with Peter A.~Ivanov.
\newblock 2009.

\bibitem{1751-8121-42-44-445302}
Renan Cabrera and Herschel Rabitz.
\newblock Calculation of the unitary part of the bures measure for n-level
  quantum systems.
\newblock {\em J. Phys. A}, 42(44):445302 (8pp), 2009.

\end{thebibliography}

\end{document}